\documentclass[11pt]{article}
\usepackage[totalwidth=13.0cm,totalheight=20.0cm]{geometry}
\usepackage{latexsym,amsthm,amsmath,amssymb,url}
\usepackage[ruled, linesnumbered]{algorithm2e}

\usepackage{tikz}
\usetikzlibrary{decorations.pathreplacing}
\usepackage{authblk}

\newtheorem{lemma}{Lemma}

\newtheorem{theorem}{Theorem}

\title{An improved kernel for the cycle contraction problem}
\author[a]{Bin Sheng \footnote{Corresponding author. Email: shengbinhello@gmail.com }}
\affil[a]{Department of Computer Science, Royal Holloway\\ University of London, Egham, Surrey, TW20 0EX, UK}
\author[b]{Yuefang Sun}
\affil[b]{Department of Mathematics, Shaoxing University\\Shaoxing, Zhejiang 312000, PR China}

\begin{document}
\maketitle

\begin{abstract}
The problem of modifying a given graph to satisfy certain properties has been one of the central topics in parameterized tractability study. In this paper, we study the cycle contraction problem, which makes a graph into a cycle by edge contractions. The problem has been studied {by Belmonte et al. [IPEC 2013]} who obtained a linear kernel with at most $6k+6$ vertices. We provide an improved kernel with at most $5k+4$ vertices for it in this paper.
\end{abstract}

\section{Introduction}

Parameterized computation is a new approach to tackle NP-hard problems, it has successful applications in many fields, {including \textsc{Combinatorial Optimization}, \textsc{Artificial Intelligence}, \textsc{Computational Biology}, and so on}.
A \emph{parameterized problem} is a subset $L\subseteq \Sigma^* \times
\mathbb{N}$ over a finite alphabet $\Sigma$. The problem $L$ is said to be
\emph{fixed-parameter tractable (FPT)} if the membership of its instance
$(x,k)$ in $\Sigma^* \times \mathbb{N}$ can be decided in time
$f(k)|x|^{O(1)}$, where $f$ is a computable function depending on the
{\em parameter} $k$ only.
Given a parameterized problem $L$, a \emph{kernelization of $L$} is a polynomial time
algorithm that shrinks an instance $(x,k)$ into a smaller instance $(x',k')$ (the
\emph{kernel}) such that $(x,k)\in L$ if and only if
$(x',k')\in L$ and $k'+|x'|\leq g(k)$ for some
function $g$.
It is well-known that a decidable parameterized problem $L$ is fixed-parameter
tractable if and only if it has a kernel. Kernels of small size are of the
main research interest, due to application needs. Thus, we have particular interests in kernels whose sizes are bounded by a polynomial function of the parameter. For a more thorough introduction to FPT and Kernelization, we refer the readers to the excellent books \cite{cygan2015parameterized,downey2013fundamentals,Flum:2006:PCT:1121738} and surveys \cite{kratsch2014recent,lokshtanov2012kernelization}.

Graph theory has been a rich source of research problems from the parameterized complexity perspective. Among them, there is a large set of research studying the distance of a graph to a certain property, that is, the minimum number of operations that make the graph satisfy the required property. Most common graph modification operations include deleting (or adding) vertices (or edges). \textsc{Vertex Cover}, \textsc{Feedback Vertex Set}, \textsc{Multiway Cut}, \textsc{Minimum Fill-in} and \textsc{Cluster Editing} are just a few of the extensively studied topics in this research framework. 

Recently, people start to look at the effect of edge contraction on a given graph, and study it in the setting of parameterized tractability.  {The parameterized complexity of the contractibility problem has been investigated for various specific classes of graphs, such as making a graph planar \cite{golovach2013obtaining}, split \cite{guo2015obtaining}, bipartite \cite{guillemot2013faster, heggernes2013obtaining}, or more specifically into a tree or a path \cite{heggernes2014contracting}. We have also seen study of contracting edges to satisfy certain degree bounds \cite{golovach2011increasing, mathieson2012editing} or to eliminate small induced subgraphs \cite{lokshtanov2013hardness}.}

This paper follows this line of research by providing an improvement for the cycle contraction problem, which asks to do minimum number of edge contractions on a given graph and make it into a cycle. The cycle contraction problem has been studied in \cite{belmonte2013parameterized}, where the authors obtained a linear kernel with at most $6k+6$ vertices. We provide an improved kernel for the problem with at most $5k+4$ vertices in this paper.

\section{Notations and Terminology}
For most of the graph theoretical concepts used in this paper, we follow the notations and terminology in \cite{bollobas2013modern}.

An undirected graph is denoted by an ordered pair $G = (V, E)$, where $E$ is a set of
unordered pairs of elements in $V$. The elements of $V$ are the \textit{vertices}
of $G$ and the elements of $E$ are the \textit{edges} of $G$. Two vertices $u, v \in V$ are \textit{adjacent}
if $u \neq v$ and $\{u, v\} \in E$, and we say they are \textit{neighbour} of each other. An edge $\{u, v\}$ is normally written as $uv$ for short, thus $u$ and $v$
are adjacent if and only if $uv \in E$. And in this case we say $u$ is \textit{incident} with the edge $uv$. We denote the \textit{degree} of $u$ in $G$ by $d_{G}(u)$, which is the number of edges incident with $u$.

A graph $H = (V', E')$ is a \textit{subgraph} of $G$ if $V'\subseteq V$ and $E' \subseteq E$. $H = (V', E')$
is an \textit{induced subgraph} of $G$ if $V' \subseteq V$ and $E' = \{uv\in E| u,v \in V' \}$. For a set of vertices $X \subseteq V(G)$, we use $G[X]$ to denote the induced subgraph of $G$ with vertex set $X$.

A \textit{path} is a non-empty graph $P=(V,E)$ with
 vertex set $V=\{ u_0, u_1, \ldots, u_k\}$ and edge set $E=\{u_0u_1, u_1u_2, \ldots, u_{k-1}u_k\}$,
where $u_i$ are all distinct. And we say it is \textit{a path between $u_0$ and $u_k$}, which are called the endvertices of $P$.
If $P=u_0u_1 \ldots u_k$ is a path and $k\geq 2$, then the graph we obtain by adding the edge $u_{k}u_0$ to $P$ is called a \textit{cycle}. The length of a path (or a cycle) is the number of edges in it. A path with at least one edge is called a \textit{nontrivial} path.

A non-empty graph $G$ is \textit{connected} if there is a path between any two of its vertices. A \textit{cut set} in a connected graph is
a set of vertices whose deletion results in a disconnected graph. A connected graph $G$ is said to be \textit{$k$-connected} if every cut set of it has size at least $k$.  {A connected graph $G$ is \textit{$k$-edge-connected} if $G$ remains connected whenever less than $k$ edges are deleted from it.} An edge $e\in E(G)$ in a connected graph $G$ is called a \textit{bridge} if $G-e = (V(G), E(G)-e)$ is disconnected.  A \textit{block} in a graph is a maximal 2-connected subgraph. 

The \emph{contraction} of an edge $uv$ in $G$ removes $u$ and $v$ from $G$, and replaces them by a new vertex adjacent to exactly all the neighbours of $u$ and $v$ in $G$. {Note that, by definition, edge contractions create neither self-loops nor multiple edges.}
 

The following notions come from \cite{belmonte2013parameterized, heggernes2014contracting}.
Let $H$ be a graph. A graph $G$ is \textit{$k$-contractible} to $H$ if we can obtain a copy of $H$ by at most $k$ edge contractions on $G$. And we say $G$ is  {\textit{contractible}} to $H$ if there is some $k$ such that $G$ is $k$-contractible to $H$. The contraction is actually defined by a surjection $\phi: V(G) \rightarrow V(H)$, where $W(h)=\{v \in V(G), \phi(v)= h\}$ is the set of vertices contracted into $h\in V(H)$. The surjection satisfies the following conditions.
\begin{enumerate}
 \item For every vertex $h\in V(H)$, $G[W(h)]$ is a connected subgraph of $G$,
 \item For every pair of vertices in $\{h_i,h_j\} \subseteq V(H)$, $h_{i}h_{j}\in E(H)$ if and only if there is an edge between $W(h_{i})$ and $W(h_{j})$,
 \item $\cup_{h\in V(H)}W(h)=V(G)$ and $W(h_{i}) \cap W(h_{j})=\emptyset$ if $i \neq j$.
\end{enumerate}
We call ${\cal W} =\{W(h), h\in V(H)\}$ an \textit{$H$-witness structure} of $G$. And for each $h\in V(H)$, $W(h)$ is called a \textit{witness set} of $\cal W$.
A witness set $W(h)$ is \textit{big} (\textit{small}, respectively) if $|W(h)|\geq 2$ ($|W(h)| =1$, respectively), and we denote it by $W_{b}(h)$ ($W_{s}(h)$, respectively).

\section{Main Result}

First let us give the formal definition of the parameterized Cycle Contraction problem.
 \begin{quote}
\fbox{~\begin{minipage}{0.9\textwidth}
  {\sc Cycle Contraction \cite{belmonte2013parameterized}} \nopagebreak

    \emph{Instance:} A connected graph $G$ and an integer $k$.
    
    \emph{Parameter:} $k$.

    \emph{Output:} Decide if one can do at most $k$ edge contractions on $G$ to modify it into a cycle.
\end{minipage}~}
  \end{quote}

In this section, we prove that the problem of \textit{Cycle Contraction} admits a kernel with at most $5k+4$ vertices, which is an improvement over the $6k+6$
kernel bound in \cite{belmonte2013parameterized}. Without loss of generality, we assume that the graphs we consider are connected, as there is no way to edge contract a disconnected graph into a cycle or a path. We also assume that the parameter $k\geq 12$ (which implies $5k+4 \geq 4.5k+10$), as for the smaller $k$, the kernel size is at most $4.5k+10$.


In \cite{li2016improved}, the authors study the following parameterized \textit{Path Contraction} problem and obtain a kernel with at most $3k+4$ vertices.

 \begin{quote}
\fbox{~\begin{minipage}{0.9\textwidth}
  {\sc Path Contraction  \cite{li2016improved}} \nopagebreak

    \emph{Instance:} A connected graph $G$ and an integer $k$.
    
    \emph{Parameter:} $k$.

    \emph{Output:} Decide if one can do at most $k$ edge contractions on $G$ to modify it into a path.
\end{minipage}~}
  \end{quote}

\begin{theorem} \label{theorem:PCkernel}\cite{li2016improved}
The parameterized Path Contraction problem admits a kernel with at most $3k + 4$ vertices.
\end{theorem}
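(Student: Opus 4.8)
The plan is to work through the path-witness structure described above and to reduce the instance by repeatedly shortening long degree-two chains. First I would record the basic accounting: if $G$ has $n$ vertices and a path-witness structure $\{W_1,\dots,W_\ell\}$, then exactly $\sum_i(|W_i|-1)=n-\ell$ contractions are performed, so $G$ is $k$-contractible to a path if and only if it admits such a structure with $\ell\ge n-k$; equivalently, minimising the number of contractions means maximising the length of a path to which $G$ contracts. Since every big witness set costs at least one contraction, a yes-instance has at most $k$ big sets, and the vertices lying in big sets number at most $2k$. Hence $n\le 2k+s$, where $s$ is the number of singleton (small) witness sets, and the whole task reduces to bounding $s$.

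The workhorse reduction rule is \emph{chain shortening}. If $G$ contains a maximal induced subpath whose internal vertices all have degree $2$ and which is longer than a fixed constant, I would contract one of its internal edges. This keeps $G$ connected and, crucially, is answer-preserving with the \emph{same} parameter $k$: in a longest path-contraction the internal degree-two vertices appear as consecutive singleton witness sets, so contracting one internal edge merely merges two of these singletons, decreasing both $n$ and $\ell$ by one and leaving $n-\ell$ unchanged; conversely any path-contraction of the shortened graph can be lifted by expanding the contracted vertex back into two consecutive singletons. A second rule deals with vertices carrying several degree-one neighbours, since a path has only two ends and all but a bounded number of such pendant vertices must be absorbed into a single witness set. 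Both rules are testable and applicable in polynomial time, and I would apply them exhaustively.

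For the size bound I would argue on a reduced yes-instance. Any vertex that is an internal member of a maximal run of consecutive singleton witness sets has both of its neighbours inside the run, hence has degree $2$; therefore a long singleton run forces a long degree-two chain, which the reduction has eliminated. Since the at most $k$ big witness sets split the path into at most $k+1$ runs of singletons, and each run is now short, $s$ is at most $k$ plus a small constant, giving $n\le 3k+4$. The kernelisation then applies the reduction rules and, if the result still has more than $3k+4$ vertices, returns a trivial no-instance.

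I expect the main obstacle to be twofold. The first is proving that the reduction rules are genuinely answer-preserving: the backward (lifting) direction must be checked when the contracted vertex, or an absorbed pendant vertex, happens to meet a big witness set or to sit at an endpoint of the target path, and this requires careful surgery on the witness structure. The second is the boundary bookkeeping needed to land on the additive constant $4$ rather than some larger constant: the two ends of the path and the interfaces between singleton runs and neighbouring big sets behave differently from the generic interior, so both the run-length bound and the count of runs must be tightened exactly at these places.
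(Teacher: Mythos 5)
Your proposal has a fatal gap: the chain-shortening rule is not answer-preserving when the degree-$2$ chain lies on a cycle, and this is exactly the hard case of the problem. Concretely, take $G=C_n$. Any path-witness structure of a cycle has at most two witness sets (contracting a cycle along connected arcs yields a cycle again whenever there are at least three arcs), so $C_n$ needs exactly $n-2$ contractions to become a path. Hence $(C_n,\,n-3)$ is a NO instance, but after contracting one edge of the (degree-$2$) chain you get $(C_{n-1},\,n-3)$, which is a YES instance: the rule flips the answer. The premise you rely on --- that in an optimal contraction the internal degree-$2$ vertices appear as consecutive singleton witness sets --- fails here: in the optimal contraction of a cycle every vertex lies in one of two big arcs, and more generally a chain lying on a cycle must be ``paired'' against the opposite side of that cycle, so its length genuinely enters the contraction budget. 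This is why the kernelization cited in the paper never contracts edges inside $2$-edge-connected subgraphs. Instead it proves a global counting bound (Lemma~\ref{lemma:ruleA}): in a $2$-edge-connected graph two consecutive witness sets cannot both be singletons, since the unique edge between them would be a cut edge; this alternation forces $q\geq(|V|-1)/3$, so any $2$-edge-connected graph on more than $3k+1$ vertices is a NO instance (Rule~A). The only edges ever contracted are bridges, under the size conditions of Rule~B.

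There is also an arithmetic problem even if your rules were safe. A maximal degree-$2$ chain can never be reduced below one vertex (contracting it leaves a degree-$2$ vertex), so a run of consecutive singletons can still have length $3$ after reduction; with at most $k+1$ runs this gives $s\leq 3(k+1)$, not ``$k$ plus a small constant,'' and hence $n\leq 2k+s\leq 5k+3$ rather than $3k+4$. The stronger $3k+4$ bound cannot be obtained from a local run-length argument; it requires the trade-off encoded in the $2$-edge-connectivity lemma above, which charges singletons against neighbouring big witness sets one-for-one rather than per run.
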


 {We add some descriptions of the reduction rules in \cite{li2016improved} here, to help explain how we make use of their result. Their kernel is obtained by exhaustively applying the following two reduction rules. Note that both reduction rules do not decrease the value of the parameter $k$.}

\begin{lemma}\label{lemma:ruleA}\cite{li2016improved}
  For any 2-edge-connected graph $G=(V,E)$, if $G$ is contractible to a path $P$ by $q$ edge contractions, then $q \geq (|V|-1)/3$.
\end{lemma}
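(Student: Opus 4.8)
The plan is to translate the statement into an upper bound on the length of the target path. First I would record the basic accounting fact that each edge contraction decreases the number of vertices by exactly one, so if $G$ is contractible to $P$ using $q$ contractions then $q = |V| - |V(P)|$. Equivalently, if $\mathcal{W} = \{W(h) : h \in V(P)\}$ is the corresponding $P$-witness structure, then $q = \sum_{h \in V(P)} (|W(h)| - 1) = |V| - |V(P)|$, since contracting a connected witness set of size $t$ into a single vertex costs exactly $t-1$ contractions. Hence proving $q \geq (|V|-1)/3$ is equivalent to proving $|V(P)| \leq (2|V|+1)/3$.

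The key structural step, and the place where 2-edge-connectivity enters, is to show that no two consecutive witness sets along the path can both be small. Write $P = h_1 h_2 \cdots h_m$ with $m = |V(P)|$. For a fixed index $i$, consider the edge cut separating $W(h_1) \cup \cdots \cup W(h_i)$ from $W(h_{i+1}) \cup \cdots \cup W(h_m)$. By condition (2) of the witness structure there are no edges between $W(h_j)$ and $W(h_l)$ whenever $|j - l| \geq 2$, so every edge crossing this cut runs between $W(h_i)$ and $W(h_{i+1})$. If both $W(h_i)$ and $W(h_{i+1})$ were singletons $W_s(h_i)$ and $W_s(h_{i+1})$, there would be at most one such edge (no multiple edges are allowed), making it a bridge of $G$ and contradicting that $G$ is 2-edge-connected. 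Therefore at least one of any two consecutive witness sets is big.

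It remains to do the counting. The ``no two consecutive singletons'' condition means the indices of the small witness sets form a set with no two consecutive positions in $\{1,\dots,m\}$, so there are at most $\lceil m/2 \rceil$ small witness sets and hence at least $\lfloor m/2 \rfloor$ big ones. Since each small set contributes one vertex and each big set at least two, we get $|V| = \sum_{h} |W(h)| \geq m + \lfloor m/2 \rfloor \geq (3m-1)/2$, which rearranges to $m \leq (2|V|+1)/3$. Substituting into $q = |V| - m$ yields $q \geq |V| - (2|V|+1)/3 = (|V|-1)/3$, as desired.

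I expect the main obstacle to be the cut argument in the second paragraph: one must verify carefully that the relevant edge cut consists only of edges between the two adjacent witness sets (this is exactly where witness-structure condition (2) is used) and that two adjacent singletons force a genuine bridge rather than merely a small cut. The endpoints deserve a sanity check---a singleton endpoint $W(h_1)$ forces its unique neighbour $W(h_2)$ to be big, since a degree-one vertex cannot occur in a 2-edge-connected graph---but this is already subsumed by the ``no two consecutive singletons'' claim, so no separate case analysis is needed, and the final counting is routine once the structural claim is established.
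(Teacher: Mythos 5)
Your proof is correct and self-contained. Note that this paper never proves the lemma itself---it is imported verbatim from \cite{li2016improved}---so there is no in-paper argument to compare against; your argument (each contraction drops the vertex count by one, witness-structure condition (2) plus 2-edge-connectivity forbids two consecutive singleton witness sets since the single crossing edge would be a bridge, then the $|V| \geq (3m-1)/2$ count) is exactly the standard proof of this fact, and all steps, including the endpoint case being subsumed by the general cut argument, check out.
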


The authors implicitly use the following reduction rule, which is implied by Lemma \ref{lemma:ruleA}.
 
 \textbf{Rule A} \cite{li2016improved} If $G$ is a 2-edge-connected graph and $|V(G)| > 3k+1$, then $G$ is a NO instance for Path Contraction with parameter $k$.

 {For a connected graph $G = (V , E )$ , let $C$ be a 2-edge-connected component or a single vertex such that each edge between $V(C)$ and $V\setminus V(C)$ is a bridge. Let $G - V (C) = \{ B_1,\ldots, B_h \}$ be the set of all connected components in
$G[V\setminus V(C)]$, where $h \geq 1$ and $|V(B_i)| \geq |V(B_j)|$ if $1 \leq i < j \leq h$.} 

 {\textbf{Rule B} \cite{li2016improved}\label{rule2} Let $e$ be the bridge of $G$ between $C$ and $B_1$. If $|V \setminus V(B_1)|\geq k+2$ and one of the following inequalities is satisfied: 
\begin{enumerate}
 \item $|V(B_1)|+(|V(C)|-1)/3 \geq k+2$, if $1\leq h \leq 3;$
 \item $|V(B_1)|+(|V(C)|-1)/3 + \Sigma^{h}_{i=4} |V(B_i)|\geq k+2$, if $h \geq 4.$
\end{enumerate}
then return $(G',k)$, where $G'$ is the graph obtained by contracting $e$.}

We will make use of their result to obtain our result on Cycle Contraction. Firstly, we will introduce the problem of \textit{Path Contraction with Fixed Endvertices (PCFE)}, which has the requirement of fixed endvertices.



 \begin{quote}
\fbox{~\begin{minipage}{0.9\textwidth}
  {\sc Path Contraction with Fixed Endvertices (PCFE)} \nopagebreak

    \emph{Instance:} A connected graph $G$, an integer $k$ and two vertices $u, v \in V(G)$.
    
    \emph{Parameter:} $k$.

    \emph{Output:} Decide if one can do at most $k$ edge contractions on $G$ and make it into a path between two vertices $u_0$ and $v_0$, such that $u \in W(u_0)$ and $v \in W(v_0)$.
\end{minipage}~}
  \end{quote}

We will show that PCFE also admits a kernel with at most $3k+4$ vertices. We prove it by reducing an instance of PCFE to an instance $(H, k)$ of the Path Contraction problem.

\begin{theorem}\label{theorem:PCFE}
PCFE admits a kernel with at most $3k+4$ vertices.
\end{theorem}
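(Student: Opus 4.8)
The plan is to reduce a PCFE instance $(G,k,u,v)$ to an equivalent \textsc{Path Contraction} instance $(H,k)$ and then invoke Theorem \ref{theorem:PCkernel}. The whole difficulty is to force $u$ and $v$ into the witness sets of the two endpoints of any target path. A single pendant vertex at $u$ does not work, since it can be swallowed into the witness set of $u$ while that set stays internal to the path. I would instead attach to $u$ a \emph{tail} $T_u$, namely a path on $k+1$ fresh vertices one end of which is joined to $u$, and symmetrically a tail $T_v$ at $v$; let $H$ be the resulting graph. The length $k+1$ is chosen precisely so that no contraction using at most $k$ edges can merge a whole tail together with $u$ into one witness set (such a set would have size at least $k+2$ and cost more than $k$ contractions), which is what will pin $u$ and $v$ to the ends.

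The core step is the equivalence: $(G,k,u,v)$ is a yes-instance of PCFE if and only if $(H,k)$ is a yes-instance of \textsc{Path Contraction}. For the forward direction, given a contraction of $G$ to a path $p_0\cdots p_\ell$ with $u\in W(p_0)$ and $v\in W(p_\ell)$, I make every tail vertex its own witness set and read off the path $T_u$–$W(p_0)$–$\cdots$–$W(p_\ell)$–$T_v$ of $H$; this uses exactly as many contractions as the original, hence at most $k$. For the backward direction, take any contraction of $H$ to a path $Q$ with at most $k$ contractions. Since a tail meets the rest of $H$ only through $u$ (resp.\ $v$), every witness set that meets a tail is either contained in that tail or contains $u$ (resp.\ $v$); because a whole tail cannot be absorbed into the witness set of $u$ within budget, at least one witness set lies entirely inside the tail, and the one holding the degree-one end of the tail caps an end of $Q$, with the witness set of $u$ as the adjacent junction. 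As $Q$ has exactly two ends and the tails are disjoint, the two tails cap the two \emph{distinct} ends, so deleting all tail vertices turns $Q$ into a path of $G$ whose two endpoints are precisely the witness sets of $u$ and $v$; the number of contractions used inside $G$ is at most the total, i.e.\ at most $k$.

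It remains to derive the $3k+4$ bound. Applying the kernelization underlying Theorem \ref{theorem:PCkernel} (Rules~A and~B, which act only on $2$-edge-connected blocks and on bridges, with Lemma \ref{lemma:ruleA} controlling the $2$-edge-connected part) to $H$ yields an equivalent instance on at most $3k+4$ vertices, from which I would strip the auxiliary tails to recover a PCFE instance of no larger size. I expect the main obstacle to be exactly this last bookkeeping: one must check that the backward direction genuinely separates $u$ from $v$ onto distinct ends, and that the reduction rules applied to $H$ can be arranged to leave the attachment vertices $u,v$ and enough of each tail intact, so that the reduced \textsc{Path Contraction} instance can honestly be read back as a PCFE instance on at most $3k+4$ vertices rather than merely as an abstract decision-equivalent graph.
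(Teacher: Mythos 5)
Your gadget and your equivalence argument are the same as the paper's: the paper forms $H = G + P_1 + P_2$ with two pendant paths of length $k+1$ attached at $u$ and $v$, proves that $(G,u,v,k)$ is a YES instance of PCFE if and only if $(H,k)$ is a YES instance of Path Contraction by the same endpoint-pinning argument you give (your backward direction is, if anything, spelled out more carefully than the paper's), and then runs the kernelization of Theorem~\ref{theorem:PCkernel} on $(H,k)$. The only real divergence is in the read-back: the paper does \emph{not} strip the tails; it keeps the reduced graph $K$ whole and declares the far tail-ends $u',v'$ to be the new fixed terminals, so that $(K,u',v',k)$ with $|V(K)|\le 3k+4$ is the PCFE kernel.

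The genuine gap is the step you defer as ``bookkeeping'' and never carry out: showing that the Path Contraction kernelization applied to $H$ never contracts an edge of $P_1$ or $P_2$ (equivalently, leaves the tails and the terminals intact). This is not bookkeeping --- it is the only point where the black box of Theorem~\ref{theorem:PCkernel} must be opened, and it is exactly the content of the paper's proof: Rule~A contracts nothing (it can only answer NO), and Rule~B contracts a bridge $e$ between $C$ and $B_1$ only under the precondition $|V\setminus V(B_1)|\ge k+2$; when $e$ lies on a tail and $C$ is a tail vertex, the set $V\setminus V(B_1)$ is a proper subpath of that tail, hence has at most $k+1$ vertices, so the precondition fails. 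From this the paper concludes $u',v'\in V(K)$, that the tails in $K$ still have length $k+1$, and hence that the pinning argument applies verbatim to $K$, making $(K,u',v',k)$ an equivalent PCFE instance. Without this verification, your chain of reductions only shows that $(H,k)$ has a small decision-equivalent Path Contraction instance, not that this instance can be honestly interpreted as a PCFE instance, which is what the theorem asserts. Two observations confirm that the deferred step carries the real weight. First, it is rule-specific and delicate: if $u$ and $v$ lie in a common 2-edge-connected block $C$ of $G$ and a tail is the largest component $B_1$ of $H-V(C)$, the contracted bridge would be the tail edge at $u$ and the ``small side'' reasoning above does not apply verbatim, so the preservation claim cannot be obtained from generic properties of kernelizations and needs a case analysis tied to Rule~B. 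Second, if your strip-the-tails variant were completed, it would yield an instance on roughly $k+2$ vertices rather than $3k+4$ --- a substantially stronger bound than claimed, which is itself a warning sign that the unverified preservation step is where the difficulty is concentrated.
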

\begin{proof}
Given an instance $(G, u, v, k)$ of PCFE, where $G$ is a connected graph with $\{u,v\} \subseteq V(G)$. We construct a new graph $H= G + P_1 + P_2$, where $P_1$ is a path with length $k+1$ between $u$ and $u'$, and $P_2$ is a path with length $k+1$ between $v$ and $v'$. An example is shown in Figure \ref{fig:lemmaProof1}. Note that $V(P_1) \cap V(G) = \{u\}$, $V(P_2) \cap V(G) = \{v\}$.

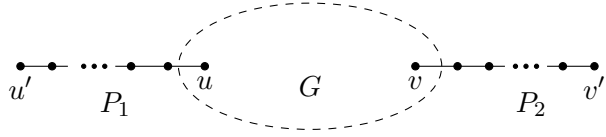
\begin{figure}
\centering
\begin{tikzpicture}[scale = 0.7]

\draw [dashed](5, 0) ellipse (2.5 and 1.2);
\draw [](5,0)node[below]{$G$};
\draw [](1.3,-0.3)node[below]{$P_1$};
\draw [](9.2,-0.3)node[below]{$P_2$};
\draw [](3,0)node[below]{$u$};
\draw [](-0.5,0)node[below]{$u'$};
\draw [](7,0)node[below]{$v$};
\draw [](10.4,0)node[below]{$v'$};

\draw [](3,0)[fill]circle[radius=0.07];
\draw [](2.3,0)[fill]circle[radius=0.07];
\draw [](1.6,0)[fill]circle[radius=0.07];
\draw [](0.1,0)[fill]circle[radius=0.07];
\draw [](-0.5,0)[fill]circle[radius=0.07];

\draw [](1.1,0)[fill]circle[radius=0.04];
\draw [](0.7,0)[fill]circle[radius=0.04];
\draw [](0.9,0)[fill]circle[radius=0.04];

\draw [](9.1,0)[fill]circle[radius=0.04];
\draw [](9.3,0)[fill]circle[radius=0.04];
\draw [](8.9,0)[fill]circle[radius=0.04];

\draw [](-0.5,0)--(0.1,0);
\draw [](2.3,0)--(3,0);
\draw [](1.6,0)--(2.3,0);
\draw [](1.6,0)--(1.3,0);
\draw [](0.1,0)--(0.4,0);

\draw [](10.4,0)--(9.8,0);
\draw [](7,0)--(7.8,0);
\draw [](7.8,0)--(8.4,0);
\draw [](8.4,0)--(8.7,0);
\draw [](9.8,0)--(9.5,0);

\draw [](7,0)[fill]circle[radius=0.07];
\draw [](7.8,0)[fill]circle[radius=0.07];
\draw [](8.4,0)[fill]circle[radius=0.07];
\draw [](9.8,0)[fill]circle[radius=0.07];
\draw [](10.4,0)[fill]circle[radius=0.07];

\end{tikzpicture}
\caption{Construction of the graph $H=G+P_1+P_2$.}
 \label{fig:lemmaProof1}
\end{figure}


Now we prove that $(G, u, v, k)$ is a YES instance of PCFE if and only if $(H, k)$ is a YES instance of Path Contraction. Moreover, we show that if $(K, k)$ is the kernel we get for Path Contraction on $(H, k)$ according to the argument in Theorem \ref{theorem:PCkernel}, then $(K, u', v', k)$ is a kernel for PCFE on $(G, u, v, k)$.

On the one hand, it is obvious to see that if $(G, u, v, k)$ is a YES instance of PCFE, then we can do the same (at most $k$) edge contractions on $H$, which would result in a path with endvertices $u'$ and $v'$.

On the other hand, suppose $(H, k)$ is a YES instance of Path Contraction. Let $\Phi$ be {a minimum} set of edges contracted that modifies $H$ into a path $P$.  {A path of length $k+1$ will still be a nontrivial path after at most $k$ edge contractions.} As both $P_1$ and $P_2$ have length $k+1$, the path $P$ must have endvertices $s$ and $t$ where $s \in P_1$ and $t \in P_2$.  Actually we must have $s = u'$ and $t = v'$ by the minimality of $\Phi$. The path $P$ must pass through some $u_0$ and $v_0$,  {where $u \in W(u_0)$ and $v \in W(v_0)$.} It is easy to see that when we contract those edges in $\Phi \cap E(G)$ on $G$, we will make $G$ into a path between $u_0$ and $v_0$.

By the above argument, we know that  $(G, u, v, k)$ is a YES instance of PCFE if and only if $(H, k)$ is a YES instance of Path Contraction.
Now we show how to get a kernel for PCFE on $(G, u, v, k)$ from a kernel for Path Contraction on $(H, k)$. 

 {Let $(K, k)$ be a kernel for Path Contraction on $(H, k)$ according to the argument in Theorem \ref{theorem:PCkernel}. Since both the lengths of $P_1$ and $P_2$ in $H$ are $k+1$, $|V(P_{i})| < k+3$, no edge on them satisfies the condition in Rule B. So the kernelization does not contract any edge in $P_1$ and $P_2$, we must have $u', v' \in V(K)$. Moreover, it is not hard to see that $(K, k)$ is a YES instance for Path Contraction if and only if $(K, u',v', k)$ is a YES instance for PCFE, thus $(K, u',v',k)$ is a kernel for PCFE on  $(G, u, v, k)$. Since $|V(K)| \leq 3k+4$, we get a kernel for PCFE on  $(G, u, v, k)$ as we want.}
\end{proof}

%
%
%
%

Now we are ready to prove our kernel bound for the Cycle Contraction problem.
We adopt the following reduction rules from \cite{belmonte2013parameterized}.

\textbf{Reduction Rule 1} \cite{belmonte2013parameterized}
If $G$ is 3-connected and $|V(G)| > 2k+4$, then return NO.

\textbf{Reduction Rule 2} \cite{belmonte2013parameterized}
If $G$ contains a block $B$ on at least $k+2$ vertices and $V(G)\setminus V(B) \neq \emptyset$, then return NO if $|V(G)\setminus V(B)|\geq k+1$, and return the instance
$(G', k- |V(G)\setminus V(B)|)$ otherwise, where $G'$ is the graph obtained from $G$ by exhaustively contracting a vertex of $V(G)\setminus V(B)$ onto one of its neighbours.

\textbf{Reduction Rule 3} \cite{belmonte2013parameterized}
If $G$ contains a block $B$ on at most $k+1$ vertices and $|V(G)\setminus V(B)|\geq k+2$, then return the instance $(G', k- |V(B)|)$, where $G'$ is the graph
obtained from $G$ by exhaustively contracting a vertex of $V(B)$ onto one of its neighbours.

Note that any connected graph that is not a tree can be contracted to a cycle. We call a cycle $C$ \textit{optimal} for $G$ if $C$ is the longest cycle to which $G$ can be contracted. 

\begin{lemma} \cite{belmonte2013parameterized}\label{lemma2}
Let $(G, k)$ be a YES instance of Cycle Contraction, $C$ be an
optimal cycle for $G$, and $W$ be a $C$-witness structure of $G$. If $G$ is 2-connected
and contains two vertices $u$ and $v$ such that $d_G(u) = d_G(v) = 2$ and $G - \{u, v\}$ has exactly two connected components $G_1$ and $G_2$, then the following three statements hold:
 
\begin{enumerate}
 \item  Either $\{u\}$ and $\{v\}$ are small witness sets of $W$, or $u$ and $v$ belong to the
same big witness set of $W$;
\item If $u$ and $v$ belong to the same big witness set $W \in W$, then $W$ contains all
the vertices of $G_1$ or all the vertices of $G_2$;
\item If $G_1$ and $G_2$ contain at least $k + 1$ vertices each, then $\{u\}$ and $\{v\}$ are
small witness sets of $W$.

\end{enumerate}
 
\end{lemma}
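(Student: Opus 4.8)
The plan is to first pin down the local structure around $u$ and $v$, and then read off all three statements from how the witness sets distribute over the two sides $G_1$ and $G_2$, using the optimality (longest $=$ fewest-contractions) of $C$ only where it is really needed.

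\textbf{Local structure.} First I would use 2-connectivity to show that each of $u,v$ has exactly one neighbour in $G_1$ and one in $G_2$, that $u$ and $v$ are non-adjacent, and that there is no edge between $G_1$ and $G_2$. Indeed, if $u$ had no neighbour in $G_1$, then deleting the single vertex $v$ would already disconnect $G_1$ from the rest, contradicting 2-connectivity; since $d_G(u)=2$ this forces exactly one neighbour on each side, say $u_1\in G_1$ and $u_2\in G_2$, and likewise $v_1\in G_1$, $v_2\in G_2$. The absence of $G_1$--$G_2$ edges is immediate, since $G_1$ and $G_2$ are the connected components of $G-\{u,v\}$.

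\textbf{Separation on the cycle, and statement 2.} The key observation is that, because there is no edge between $G_1$ and $G_2$, any witness set meeting both $G_1$ and $G_2$ must contain $u$ or $v$ (these are the only vertices that can link the two sides inside a connected witness set). Consequently a witness set contained in $G_1$ and a witness set contained in $G_2$ can never be adjacent on $C$, so $\phi(u)$ and $\phi(v)$ are the only places on the cycle where the $G_1$-region can meet the $G_2$-region. If $u$ and $v$ lie in the same big set $W$, then $C$ minus the single vertex $W$ is one arc (a path of witness sets), none of which contains $u$ or $v$; such a path cannot contain both a set inside $G_1$ and a set inside $G_2$, for somewhere along it two sets of opposite type would be forced to be adjacent. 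Hence the arc has a single type, i.e.\ all of $G_1$ (or all of $G_2$) was absorbed into $W$. This is exactly statement 2.

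\textbf{Statement 1, via optimality.} If $\phi(u)=\phi(v)$ we are in the ``same big set'' case, so assume $\phi(u)\ne\phi(v)$ and, for contradiction, that the witness set $W$ with $u\in W$ is big. Removing $\phi(u),\phi(v)$ now splits $C$ into at most two arcs, each pure of a single type by the observation above. Writing $W=S_1\cup\{u\}\cup S_2$ with $S_i$ the part in $G_i$, the local structure shows $u$ is the unique link between $S_1$ and $S_2$ inside $W$, so both $S_1$ and $\{u\}\cup S_2$ are connected. The idea is to replace $W$ by these two sets, inserted consecutively so that the two old cycle-edges of $W$ survive (one leaves $S_1$, one leaves $S_2$) and the new edge $u_1u$ links the pieces; this produces a cycle witness structure with one more vertex, contradicting optimality, and symmetrically for $v$. \emph{This splitting step is the main obstacle.} I must ensure the reconstructed quotient is again an induced cycle, i.e.\ that no chord is created and every set stays connected. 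When the two arcs have opposite types and $\phi(u),\phi(v)$ are non-adjacent this is clean, because then $v_1,v_2\notin W$ (otherwise $W$ and $\phi(v)$ would already be adjacent on $C$), so no stray edge from $S_1$ or $S_2$ reaches $\phi(v)$. The awkward cases are the degenerate ones---when an arc is empty so that $\phi(u)$ and $\phi(v)$ are adjacent, or when both arcs would carry the same type---and these must be treated separately: same-type arcs are impossible, since connectedness of the folded side would force a chord between the two non-adjacent sets $\phi(u),\phi(v)$, while in the adjacent case one instead \emph{unfolds} the folded side into its own arc between $u$ and $v$. Getting these degenerate configurations to close off cleanly is the crux.

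\textbf{Statement 3.} Finally, since $(G,k)$ is a YES instance and $C$ is optimal, $C$ is reachable with the fewest contractions, so $|V(G)|-|V(C)|=\sum_{h}(|W(h)|-1)\le k$. If $\{u\}$ were not small then, by statements 1 and 2, $u$ and $v$ would lie in a common big set $W$ with $W\supseteq V(G_1)$ or $W\supseteq V(G_2)$; using $|V(G_1)|,|V(G_2)|\ge k+1$ this gives $|W|\ge k+3$, hence at least $k+2>k$ contractions, a contradiction. Thus $\{u\}$ and $\{v\}$ are small witness sets.
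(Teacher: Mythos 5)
First, a point of reference: the paper never proves this lemma at all --- it is imported verbatim from \cite{belmonte2013parameterized} --- so your proposal can only be judged on its own merits, not against an in-paper argument. On those merits, much of it is right: the local structure (each of $u,v$ has exactly one neighbour per side, $uv\notin E$, no $G_1$--$G_2$ edges), the ``pure arc'' observation, your proof of statement 2, the impossibility of two nonempty same-type arcs, the splitting argument for statement 1 when $\phi(u)$ and $\phi(v)$ are \emph{non-adjacent} on $C$, and the derivation of statement 3 from statements 1--2 plus the budget $\sum_h(|W(h)|-1)\le k$ are all sound.

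The gap is exactly the degenerate case you yourself flag as ``the crux'' and then wave through: $\phi(u)$ and $\phi(v)$ adjacent on $C$, with one side folded into $W(\phi(u))\cup W(\phi(v))$. Your proposed repair (``unfold the folded side into its own arc'') fails, and in fact \emph{no} argument from optimality alone can close this case, because the configuration you are trying to rule out can occur in an optimal structure. Take $G$ to be the theta graph on two branch vertices $s,t$ joined by the edge $st$, by the path $s\,a_1\,a_2\,a_3\,a_4\,t$, and by the path $s\,u\,x\,y\,v\,t$. Then $G$ is 2-connected, $d_G(u)=d_G(v)=2$, and $G-\{u,v\}$ has exactly the two components $G_1=G[\{s,t,a_1,\dots,a_4\}]$ (a 6-cycle) and $G_2=G[\{x,y\}]$ (an edge); moreover $(G,4)$ is a YES instance. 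Since $st\in E(G)$, in any cycle witness structure $\phi(s)$ and $\phi(t)$ are equal or adjacent, and the union of the remaining sets (the arc) is connected while the two branch interiors $\{a_1,\dots,a_4\}$ and $\{u,x,y,v\}$ touch each other only through $s,t$; hence one branch interior lies inside $W(\phi(s))\cup W(\phi(t))$ and every cycle has at most $2+4=6$ sets. The structure $\{s,u,x\},\{a_1\},\{a_2\},\{a_3\},\{a_4\},\{t,v,y\}$ attains $6$, so it is optimal --- yet $u$ and $v$ lie in \emph{distinct big} witness sets. Here $S_1=\{s\}$, $T_1=\{t\}$, and the edge $st$ is precisely the $S_1$--$T_1$ chord that defeats every unfolding or splitting: any new cycle must keep the sets containing $s$ and $t$ consecutive or merged, so what unfolding gains on the $G_2$ side it loses again on the $G_1$ side (indeed the longest ``unfolded'' cycle is again $6$).

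The consequence is that statement 1, read literally with no size hypothesis on $G_1,G_2$ (which is how the paper transcribes it), is false, so your plan cannot be completed as stated; this appears to be an imprecision in the transcription of the lemma from \cite{belmonte2013parameterized} rather than a fixable slip in your argument. The degenerate case must instead be handled by the budget argument you reserve for statement 3: when one side, say $G_2$, is folded, then $|W(\phi(u))|+|W(\phi(v))|\ge |V(G_2)|+2$, so if $|V(G_2)|\ge k+1$ these two sets alone force at least $k+1>k$ contractions, contradicting that $(G,k)$ is a YES instance with $C$ optimal. With that substitution (and the size hypotheses moved into the assumptions of statements 1--2), your non-degenerate splitting argument plus statement 2 does yield the lemma in the form the paper actually uses --- note that Reduction Rule 4 only ever invokes statement 3, with both sides of size at least $k+2$, so the paper's application is unaffected.
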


Based on the observation in Lemma \ref{lemma2}, we introduce a novel reduction rule which is the key to make the improvement.

\textbf{Reduction Rule 4}
Let $(G,k)$ be an instance of Cycle Contraction, where $G$ is 2-connected. If $G$ contains two vertices $x$ and $y$ such that $d_G(x) = d_G(y) = 2$, and the graph $G - \{x, y\}$ has exactly two connected components $G_1$ and $G_2$, such that  $|V(G_1)| \geq k+2$ and $|V(G_2)| \geq k+2$. Then we can obtain a kernel $(K,k)$ for $(G,k)$ with $|V(K)|\leq 4.5k+10$.

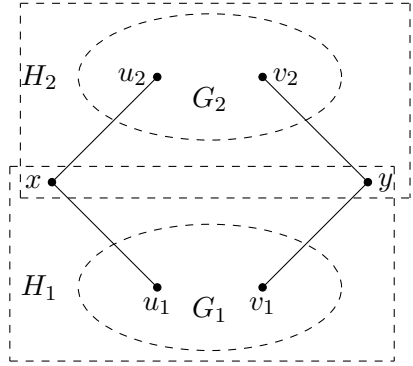
\begin{figure}
\centering
\begin{tikzpicture}[scale = 0.7]

\draw [dashed](5, 0) ellipse (2.5 and 1.2);
\draw [dashed](5, 4) ellipse (2.5 and 1.2);

\draw [](2.3,0)node[left]{$H_1$};
\draw [](2.3,4)node[left]{$H_2$};
\draw [](2,2)node[left]{$x$};
\draw [](5,0)node[below]{$G_1$};
\draw [](8,2)node[right]{$y$};
\draw [](5,4)node[below]{$G_2$};

\draw [](4,4)node[left]{$u_2$};
\draw [](4,0)node[below]{$u_1$};
\draw [](6,0)node[below]{$v_1$};
\draw [](6,4)node[right]{$v_2$};

\draw [](2,2)[fill]circle[radius=0.07];
\draw [](8,2)[fill]circle[radius=0.07];
\draw [](4,4)[fill]circle[radius=0.07];
\draw [](6,4)[fill]circle[radius=0.07];
\draw [](4,0)[fill]circle[radius=0.07];
\draw [](6,0)[fill]circle[radius=0.07];

\draw [](2,2)--(4,4);
\draw [](8,2)--(6,4);
\draw [](2,2)--(4,0);
\draw [](8,2)--(6,0);

\draw [dashed](1.2,2.3)--(8.5,2.3);
\draw [dashed](1.2,-1.4)--(8.5,-1.4);
\draw [dashed](1.2,2.3)--(1.2,-1.4);
\draw [dashed](8.5,2.3)--(8.5,-1.4);

\draw [dashed](1.4,1.7)--(8.8,1.7);
\draw [dashed](1.4,5.4)--(8.8,5.4);
\draw [dashed](1.4,1.7)--(1.4,5.4);
\draw [dashed](8.8,5.4)--(8.8,1.7);

\end{tikzpicture}
\caption{Constructions of $H_1$ and $H_2$.}
 \label{fig:reductionProof}
\end{figure}


We now prove the correctness of Reduction Rule 4.

\begin{lemma}
Reduction Rule 4 is safe. 
\end{lemma}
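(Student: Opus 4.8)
The plan is to read off the global structure of any solution from Lemma~\ref{lemma2}(3), turn the cycle problem into two independent fixed-endpoint path problems, and then invoke the PCFE kernel of Theorem~\ref{theorem:PCFE} on each side. Concretely, I would first observe that whenever $(G,k)$ is a YES instance the \emph{optimal} (longest) cycle $C$ is reachable within $k$ contractions, since a longer target cycle can only need fewer contractions. Because $|V(G_1)|,|V(G_2)|\ge k+2> k+1$, Lemma~\ref{lemma2}(3) then forces $\{x\}$ and $\{y\}$ to be \emph{small} witness sets of $C$. Writing $u_1,v_1$ for the neighbours of $x,y$ in $G_1$ and $u_2,v_2$ for their neighbours in $G_2$, the $2$-connectedness of $G$ rules out $xy\in E$ and forces $u_1\neq v_1$, $u_2\neq v_2$ (otherwise a single vertex would separate $G$). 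With $x,y$ small, $C$ splits as $x\,P_1\,y\,P_2\,x$, where $P_1$ is a contraction of $G_1$ to a path whose two ends are the witness sets containing $u_1$ and $v_1$, and $P_2$ is the analogous contraction of $G_2$; this is exactly the situation of $H_1,H_2$ depicted in Figure~\ref{fig:reductionProof}.

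This decomposition gives the key equivalence: $(G,k)$ is a YES instance if and only if $q_1+q_2\le k$, where $q_i$ is the minimum number of contractions turning $G_i$ into a path with the prescribed endvertices $u_i,v_i$. In other words, each side is precisely an instance of PCFE, namely $(G_1,u_1,v_1)$ and $(G_2,u_2,v_2)$, and no contraction ever touches $x$ or $y$. By Theorem~\ref{theorem:PCFE} each side admits a kernel with at most $3k+4$ vertices; moreover, since the two length-$(k+1)$ paths attached during that construction are never contracted, the part of the kernel coming from $G_i$ itself has at most $(3k+4)-2(k+1)=k+2$ vertices. I would therefore keep only these reduced cores $G_i^{*}$, with the images $u_i^{*},v_i^{*}$ of $u_i,v_i$, as representatives of the two sides.

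Next I would reassemble a Cycle Contraction instance $K$ from $G_1^{*}$, $G_2^{*}$ and two fresh degree-$2$ vertices $x,y$, joining $x$ to $u_1^{*},u_2^{*}$ and $y$ to $v_1^{*},v_2^{*}$; along each $u_i^{*}$–$v_i^{*}$ backbone I would, if necessary, pad with a path so that each side has at least $k+1$ vertices (inserting only degree-$2$ path vertices adds nothing to that side's minimum). Then $x,y$ are degree-$2$ vertices of $K$ whose removal leaves two components of at least $k+1$ vertices each, so Lemma~\ref{lemma2}(3) applies to $K$ and yields: $(K,k)$ is YES iff the two sides can be contracted to their fixed-endpoint paths with total budget at most $k$. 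Provided the per-side reduction preserves the \emph{minimum} number of contractions, this total is again $q_1+q_2$, so $(K,k)$ is equivalent to $(G,k)$. The size bound is then immediate: each side contributes at most $\max(k+2,k+1)$ vertices and $x,y$ contribute $2$, so $|V(K)|$ is well within $4.5k+10$.

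The delicate point, and the step I expect to be the main obstacle, is the clause ``preserves the minimum number of contractions.'' A PCFE kernel a priori only certifies equivalence at its own parameter~$k$, whereas here the two sides share a single budget, so I must upgrade this to \emph{exact} preservation of each $q_i$. Since Rule~A of Theorem~\ref{theorem:PCkernel} only rejects (it is consistent with Lemma~\ref{lemma:ruleA} across all budgets $\le k$), the real work is Rule~B: I would need to show that, under its size condition, there is always an optimal fixed-endpoint path-contraction in which the two endpoints of the contracted bridge $e$ already lie in a common witness set, so that contracting $e$ is ``free'' and leaves $q_i$ unchanged. This min-preservation, the verification that padding does not alter $q_i$, and the check that the reassembled $x,y$ genuinely satisfy the hypotheses of Lemma~\ref{lemma2}(3), together form the heart of the proof; the remaining size arithmetic is routine.
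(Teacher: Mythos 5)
Your proposal reproduces the paper's opening moves exactly: Lemma~\ref{lemma2}(3) forces $\{x\}$ and $\{y\}$ to be small witness sets, so $(G,k)$ is a YES instance if and only if the budget can be split as $k_1+k_2\le k$ with $(G_i,u_i,v_i,k_i)$ a YES instance of PCFE for $i=1,2$. The divergence --- and the genuine gap --- is in how you convert the two PCFE instances into one bounded-size Cycle Contraction instance. You kernelize both sides at parameter $k$, strip off the two length-$(k+1)$ pendant paths to get ``cores'' of at most $k+2$ vertices, pad, and reassemble around fresh vertices $x,y$. For the reassembled instance to be equivalent to $(G,k)$, the kernelization (and the padding) must preserve the \emph{exact} minimum number of contractions $q_i$, not just the YES/NO answer at the parameter it was run at, because the two sides now share one budget. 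You identify this yourself and then defer it (``the step I expect to be the main obstacle'', ``\ldots together form the heart of the proof''). Deferring it means the lemma is not proved: nothing in Theorem~\ref{theorem:PCkernel} or~\ref{theorem:PCFE} as stated gives you that a Rule~B contraction leaves $q_i$ unchanged, and your sketch of why it might (an optimal solution always has the bridge inside one witness set) is precisely the auxiliary lemma you would still have to establish. Moreover, stripping the pendant paths discards the very gadget that pins the path endpoints, which is why you are then forced to pad and to re-verify the hypotheses of Lemma~\ref{lemma2}(3) on the reassembled graph --- two further steps you also leave unverified. The fact that your construction would yield a bound around $2k+6$, strictly better than the paper's $4.5k+10$, is itself a warning sign that the stripped-core route is not as cheap as the arithmetic suggests.

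The paper never dismantles the kernels, and that is what lets its proof close. It keeps $K_1$ and $K_2$ intact, pendant paths included, and glues them to the re-inserted vertices $x$ and $y$. The size is controlled not by stripping but by an asymmetric choice of budgets: both sides are queried at parameter $\lfloor k/2\rfloor$ (if $(G,k)$ is YES then $\min(k_1,k_2)\le\lfloor k/2\rfloor$, so at least one query cannot return NO), and the surviving side's partner is then queried at the full budget $k$, giving $|V(R)|\le(3\lfloor k/2\rfloor+4)+(3k+4)+2\le 4.5k+10$. The retained pendant paths do double duty in the equivalence argument: their total length exceeds every sub-budget $k_i\le k$, which pins the path endpoints at $u_i'$ and $v_i'$ for every relevant parameter (this is how the paper argues the kernels remain faithful below, and across, the parameters they were built at), and they make each glued side large enough that the Lemma~\ref{lemma2}(3) decomposition applies to $(R,k)$. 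These are exactly the two facts your stripping destroys and forces you to rebuild by hand. If you want to salvage your route, the missing min-preservation lemma for Rule~B (and for your padding) is what must be proved; the paper's $\lfloor k/2\rfloor$ budget-splitting trick is its way of never confronting that lemma in this bare form.
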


\begin{proof}
Let's construct two graphs $H_1 = G - V(G_2)$, $H_2 = G- V(G_1)$, see Figure \ref{fig:reductionProof} for an illustration. Both $x$ and $y$ have degree 2, let $N_{G}(x) = \{u_1, u_2\}$ and $N_{G}(y) = \{v_1, v_2\}$ with $\{u_1, v_1\} \subseteq V(G_1), \{u_2, v_2\} \subseteq V(G_2)$. Since $G$ is 2-connected, we know $u_1 \neq v_1$ and $u_2 \neq v_2$.
By statement 3 in Lemma \ref{lemma2}, we know that both $\{x\}$ and $\{y\}$ should be small witness sets, thus the problem of contracting $G$ into a cycle is equivalent to doing at most $k$ edge contractions that make both $H_1$ and $H_2$ into paths between $x$ and $y$. Note that one can contract $H_i$ into a path between $x$ and $y$ where both $\{x\}$ and $\{y\}$ are small witness sets with at most $k_i$ edge contractions  if and only if  $(G_i, u_i, v_i, k_i)$ is a YES instance for PCFE with $i\in \{1, 2\}$.

Let's consider the following two instances of PCFE, $(G_1, u_1, v_1, \lfloor{k/2}\rfloor)$ and $(G_2, u_2, v_2, \lfloor{k/2}\rfloor)$. If the answers to both $(G_1, u_1, v_1,$ $ \lfloor{k/2}\rfloor)$ and $(G_2, u_2, v_2, $ $ \lfloor{k/2}\rfloor)$ are NO, then we know that $(G, k)$ is a No instance for Cycle Contraction.

If either enquiry gives a kernel, then we know it should have at most $3\lfloor{k/2}\rfloor+4$ vertices by Theorem \ref{theorem:PCFE}. Without loss of generality, assume that $(G_1, u_1, v_1, $ $ \lfloor{k/2}\rfloor)$ gives a kernel  $(K_1, u'_{1}, v'_{1}$, $\lfloor{k/2}\rfloor)$ following the argument in Theorem \ref{theorem:PCFE}. We further look at $(G_2, u_2, v_2, k)$ as an instance of PCFE. If the answer to $(G_2, u_2, v_2, k)$ is NO, then we know that $(G, k)$ is also a No instance for Cycle Contraction. Otherwise, we get a kernel $(K_2, u'_{2}, v'_{2}, k)$ for $(G_2, u_2, v_2, k)$ with at most $3k+4$ vertices by Theorem \ref{theorem:PCFE}. 

 {Let $R$ be the graph obtained from $K_1$ and $K_2$ by adding edges $\{xu'_{1}, xu'_{2}\}$ and $\{yv'_{1}, yv'_{2}\}$, which implies that $|V(R)|=|V({K_1})|+|V(K_{2})|+|\{x,y\}|\leq 3\lfloor{k/2}\rfloor+4 + 3k+4 +2\leq 4.5k+10$. } 

\textbf{Claim:} $(R,k)$ is a kernel for Cycle Contraction on $(G, k)$. 

Observe that $G$ can be contracted into a cycle by at most $k$ edge contractions, if and only if there exist two non-negative integers $k_1$ and $k_2$ such that $k_1 + k_2 =k$, and $(G_i, u_i, v_i, k_i)$ is a YES instance for PCFE with $i\in \{1, 2\}$. Recall that $(K_1, u'_{1}, v'_{1}$, $\lfloor{k/2}\rfloor)$ is a kernel for $(G_1, u_1, v_1, \lfloor{k/2}\rfloor)$, and the two pendent paths $P_1$ and $P_2$ in $K_1$ constructed according to the argument of Theorem \ref{theorem:PCFE} have total length $2(\lfloor{k/2}\rfloor+1) > k_1$. So $(G_1, u_1, v_1, k_1)$ is a YES instance for PCFE if and only if $K_1$ can be contracted into a path between $u'_{1}$ and $v'_{1}$ by at most $k_1$ edge contractions. And $(K_2, u'_{2}, v'_{2}, k)$ is a kernel for $(G_2, u_2, v_2, k)$, $(G_2, u_2, v_2, k_2)$ is a YES instance for PCFE if and only if $K_2$ can be contracted into a path between $u'_{2}$ and $v'_{2}$ by at most $k_2$ ($\leq k$) edge contractions. Thus $(R,k)$ is a kernel for Cycle Contraction on $(G, k)$.
\end{proof}

\begin{theorem}
The Cycle Contraction problem admits a kernel with at most $5k+4$ vertices.
\end{theorem}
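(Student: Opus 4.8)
The plan is to weld the four reduction rules into one kernelization procedure and to show that any instance surviving all of them has at most $5k+4$ vertices. First I would apply Reduction Rules 1, 2 and 3 exhaustively, together with a search for the configuration demanded by Reduction Rule 4; each of these either answers the instance, strictly decreases the number of vertices without increasing $k$, or (in the case of Rule 4) already returns a kernel on at most $4.5k+10\le 5k+4$ vertices, the last inequality being exactly where the hypothesis $k\ge 12$ is spent. So it suffices to bound $|V(G)|$ for a reduced instance to which none of the rules applies.

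The block structure disposes of most cases cheaply. After Reduction Rule 2 is exhausted, no block on at least $k+2$ vertices can have a nonempty outside, so either $G$ is a single such block (hence $2$-connected) or every block has at most $k+1$ vertices. In the latter case, since Reduction Rule 3 no longer applies, every block $B$ satisfies $|V(G)\setminus V(B)|\le k+1$, and picking any block gives $|V(G)|\le (k+1)+(k+1)=2k+2$. If $G$ is $2$-connected and in addition $3$-connected, Reduction Rule 1 caps it at $2k+4$ vertices. Both bounds sit comfortably below $5k+4$, so the only remaining case is that $G$ is $2$-connected but not $3$-connected.

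For this case I would invoke the witness structure. Assuming $G$ is a YES instance (the NO case is handled below), fix an optimal cycle $C$ and a $C$-witness structure $\mathcal W$. Since at most $k$ contractions are used, there are at most $k$ big witness sets, containing at most $2k$ vertices in total; moreover the big sets cut the cyclic order into at most $k$ arcs of consecutive singletons, contributing at most $2k$ ``boundary'' singletons, namely the two ends of each arc. Every remaining singleton is an interior vertex of some arc, so both of its cyclic neighbours are singletons and it therefore has degree exactly $2$ in $G$; there are at least $|V(G)|-4k$ such vertices. The key point is that, because non-consecutive witness sets have no edges between them, deleting any two interior degree-$2$ vertices splits $G$ into exactly the two arcs of the cycle they determine, a configuration of precisely the kind treated in Lemma \ref{lemma2} and Reduction Rule 4. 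If some pair of these vertices yields two arcs of at least $k+2$ vertices each, the hypothesis of Reduction Rule 4 is met and we are finished. Otherwise every such pair lies at cyclic distance at most $k+2$, so all interior degree-$2$ vertices are confined to an arc of at most $k+3$ consecutive positions; hence $|V(G)|-4k\le k+3$, that is $|V(G)|\le 5k+3$.

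The gap still to close is the treatment of NO instances, for which no witness structure exists. Here I would argue by contraposition: the previous paragraph shows that any reduced $2$-connected, non-$3$-connected YES instance on more than $5k+3$ vertices must expose the balanced degree-$2$ $2$-cut required by Reduction Rule 4, so a reduced instance on more than $5k+4$ vertices that has escaped Rule 4 can safely be replaced by a trivial NO instance. The main obstacle is exactly the counting of the third paragraph: one must check that, once the big-set vertices and the few boundary singletons are removed from the tally, enough genuine degree-$2$ vertices remain, and are spread out enough around the cycle, that a balanced $2$-cut with at least $k+2$ vertices on each side is forced whenever $|V(G)|$ exceeds the target. Pinning the constant at $5k+4$, rather than at some looser linear bound, is what makes this bookkeeping the delicate part of the argument.
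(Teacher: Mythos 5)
Your overall framework is the same as the paper's: exhaust Reduction Rules 1--4, dispose of the non-2-connected case ($\leq 2k+2$ vertices) and the 3-connected case via Rule 1, and in the remaining 2-connected case use a $C$-witness structure of a hypothetical YES instance to show that $|V(G)|\geq 5k+4$ would force a pair of degree-2 singleton vertices whose removal leaves two components with at least $k+2$ vertices each, so that Rule 4 would have applied. Your counting is also essentially the paper's: at most $2k$ vertices in big witness sets, at most $2k$ ``boundary'' singletons (the paper instead bounds the set $V_3$ of singletons of degree greater than $2$, which is contained in your boundary set), hence at least $|V(G)|-4k\geq k+4$ interior degree-2 singletons, and deleting two of them splits $G$ into exactly the two arcs they determine.

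The genuine gap is at the decisive step, and your closing paragraph in effect concedes it. From ``every pair of interior degree-2 vertices lies at cyclic distance at most $k+2$'' you conclude, with a bare ``so,'' that all such vertices are confined to an arc of at most $k+3$ consecutive positions. On a cycle this implication is not automatic: pairwise closeness in the cyclic metric does not imply confinement to a short arc. For instance, four vertices equally spaced on a cycle with $2k+4$ vertices are pairwise at cyclic distance at most $k+2$, yet no arc of $k+3$ consecutive positions contains all of them. The implication you need is true here, but only because $|V(G)|\geq 5k+4$ and because there are at least $k+4$ interior degree-2 vertices, and establishing it is precisely the non-trivial content of the paper's proof: fix a small-witness vertex $v_0$ with clockwise cycle-neighbour $w$, walk clockwise through the small-witness vertices $v_1,v_2,\ldots$, and let $i$ be minimal so that the component of $K-\{v_0,v_{i+1}\}$ containing $W(w)$ has at least $k+2$ vertices; then either the pair $(v_0,v_{i+1})$ already gives a balanced split, or the \emph{consecutive} pair $(v_i,v_{i+1})$ does, because the arc between $v_i$ and $v_{i+1}$ avoiding $v_0$ has at least $5k+4-2(k+1)-3=3k-1\geq k+2$ vertices (the two ``short'' arcs each have at most $k+1$ vertices), while the other component contains all remaining small-witness vertices, of which there are at least $k+2$. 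Your proposal asserts the conclusion of this argument rather than proving it, and then labels exactly this step ``the main obstacle'' without carrying it out; as written, the proof is incomplete at its crux, even though the surrounding scaffolding (including the contrapositive treatment of NO instances, which matches the paper's proof by contradiction) is sound.
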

\begin{proof}
 We describe our algorithm to obtain the claimed kernel for the Cycle Contraction problem. The correctness of the algorithm follows from the correctness of the reduction rules. Given an instance $(G, k)$ of Cycle Contraction, the algorithm begins with
 applications of Reduction Rules 1-4 we listed above. Let $K$ be the resulting instance after all possible applications of the reduction rules.
 If $K$ is 3-connected, then we must have that $|V(K)| \leq 2k+3$, as otherwise Reduction Rule 1 could be applied.

If $K$ is not 2-connected, then $K$ has at least two blocks as $K$ is connected by assumption. Let $B$ be any block of $K$. Then $|V(B)| \leq k + 1$, as otherwise Rule 2 could be applied. Moreover, $|V(K) \setminus V(B)| \leq k + 1$ due to the assumption that Rule 3 cannot be applied. Hence $|V(K)| \leq 2k + 2$. In the following, we assume that $K$ is 2-connected and will prove the following claim.

 \textbf{Claim:} If $K$ is 2-connected and $|V(K)|\geq 5k+4$, then we can safely return NO.
 
 To see why the claim is correct, suppose, on the contrary, that $(K,k)$ is a YES instance of Cycle Contraction such that $K$ is a 2-connected graph with at least $5k+4$ vertices after all applications of Reduction Rules 1-4. Let $C$ be an optimal cycle for $K$ and let $W$ be a $C$-witness structure of $K$. We know that $V(K) = V_1 \cup V_2 \cup V_3$, where $V_1 = \{v\in W_{b}(u)| u\in C\}$, $V_{2} = \{ u| d_{C}(u) = 2, d_{K}(u)=2 \} $ and $V_{3} = \{u| d_{C}(u) = 2, d_{K}(u)>2 \}$. There are at most $k$ big witness sets in $W$, which in total contains at most $2k$ vertices, thus $|V_1| \leq 2k$.  Any vertex $v\in  V_3$ must be adjacent to vertices in $V_1$ as $d_{K}(v) > d_{C}(v)=2$ and it is in a small witness set, so $|V_3|\leq 2k$. Thus we know $|V_2|=|V(K)|-|V_1|-|V_3| \geq 5k+4 - 2k -2k = k+4$. Let's call any vertex in $V_2$ \textit{a small-witness vertex}. In the following, we will prove that there must be two small-witness vertices $x$ and $y$ such that  $K-\{x, y\}$ has two components each of which contains at least $k+2$ vertices, thus Reduction Rule 4 can be applied, a contradiction.

 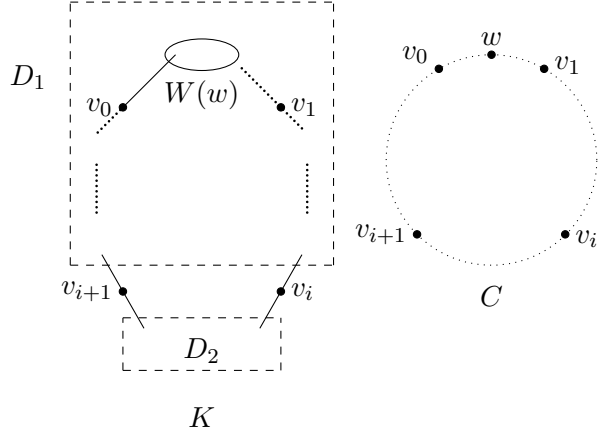
\begin{figure}\label{fig:thmProof}
 \centering
 \begin{tikzpicture}[scale = 0.7]
 
 \draw [](2.5, 6) ellipse (0.7 and 0.3);

 \draw [dashed](1,0)--(4,0);
 \draw [dashed](1,0)--(1,1);
 \draw [dashed](4,0)--(4,1);
 \draw [dashed](1,1)--(4,1);
 \draw [dashed](0,2)--(5,2);
 \draw [dashed](0,7)--(5,7);
 \draw [dashed](0,2)--(0,7);
 \draw [dashed](5,2)--(5,7);
 \draw [](1.4, 0.8)--(0.6,2.2);
 \draw [](3.6, 0.8)--(4.4,2.2);
 \draw [line width=1pt, line cap=round, dash pattern=on 0pt off 2\pgflinewidth](0.5,3)--(0.5,4);
 \draw [line width=1pt, line cap=round, dash pattern=on 0pt off 2\pgflinewidth](4.5,3)--(4.5,4);
 
 \draw [](1, 5)--(2,6);
 \draw [line width=1pt, line cap=round, dash pattern=on 0pt off 2\pgflinewidth](4.4, 4.6)--(3.2, 5.8);
 \draw [line width=1pt, line cap=round, dash pattern=on 0pt off 2\pgflinewidth](0.95, 4.95)--(0.5, 4.5);

 \draw [dotted](8, 4)[]circle[radius=2];

 \draw [](1,1.5)[fill]circle[radius=0.07];
 \draw [](4,1.5)[fill]circle[radius=0.07];
 \draw [](1,5)[fill]circle[radius=0.07];
 \draw [](4,5)[fill]circle[radius=0.07];

 \draw [](1,1.5)node[left]{$v_{i+1}$};
 \draw [](4,1.5)node[right]{$v_{i}$};
 \draw [](1,5)node[left]{$v_{0}$};
 \draw [](4,5)node[right]{$v_{1}$};
 \draw [](2.5,5.7)node[below]{$W(w)$};
 
 \draw [](8,6)[fill]circle[radius=0.07];
 \draw [](8,6)node[above]{$w$};
 \draw [](9.4,2.586)[fill]circle[radius=0.07];
 \draw [](9.4,2.5)node[right]{$v_{i}$};
 \draw [](6.586,2.586)[fill]circle[radius=0.07];
 \draw [](6.586,2.586)node[left]{$v_{i+1}$};
 \draw [](9,5.732)[fill]circle[radius=0.07];
 \draw [](9,5.8)node[right]{$v_{1}$};
 \draw [](7,5.732)[fill]circle[radius=0.07];
 \draw [](7,6)node[left]{$v_{0}$};
 
 \draw [](8,1.8)node[below]{$C$};
 \draw [](2.5,-0.5)node[below]{$K$};
 \draw [](2.5,0.8)node[below]{$D_2$};
 \draw [](-0.8,6)node[below]{$D_1$};

 \end{tikzpicture}
 \caption{The kernel $K$ and its optimal cycle $C$.}  
 \end{figure}

  Choose a small-witness vertex $v_{0}$, and let $w$ be the neighbour of $v_{0}$ clockwisely in $C$, as shown in Figure \ref{fig:thmProof}. We want to find another small-witness vertex $v$ in the cycle $C$ such that $K-\{v_0, v\}$ contains two connected components, each of which has size at least $k+2$.  Starting at $v_0$, let's look at the small-witness vertices in $C$ one by one clockwisely. And denote these vertices by $v_{i}$ with $i = 0,1,2, \ldots, |V_2|$. Let $i \geq 0$ be the smallest subscript such that the component containing $W(w)$ in $K-\{v_0,v_{i+1}\}$ contains at least $k+2$ vertices. If $K-\{v_0, v_{i+1}\}$ contains two connected components each of which has size at least $k+2$, then we are done. Otherwise, we know the number of vertices in the component not containing $W(w)$ in $K-\{v_0,v_{i+1}\}$ is less than $k+2$. Denote the two components of $K-\{v_{i},v_{i+1}\}$ by $D_1$ and $D_2$, where $W(w) \subseteq V(D_1)$. Since $|V(K)|\geq 5k+4$ and the component in $K-\{v_0,v_{i}\}$ containing $W(w)$ has less than $k+2$ vertices by the choice of $i$, we have $|V(D_2)| \geq  5k+4-2(k+1)-3=3k-1\geq k+2$. As there is no small-witness vertex in $D_2$, there are at least $k+4- |\{v_{i},v_{i+1}\}|=k+2$ small-witness vertices in $D_1$, thus $|V(D_1)|\geq k+2$. Let $x = v_i$ and $y=v_{i+1}$, then $K-\{x, y\}$ has two components each of which contains at least $k+2$ vertices. Thus we have found two vertices with the requested properties.
 
 It remains to observe that our kernelization algorithm can be run in polynomial time. For Reduction Rule 1, it takes $O(n^3)$ steps to check if a graph is 3-connected. And for Reduction Rule 2 or 3, it takes $O(n^2)$ steps to decide if they are applicable. And each application of Reductions 2 and 3 either returns NO or decreases the number of vertices, they can be exhaustively applied in polynomial time. As for Reduction Rule 4, note that the kernelization for PCFE can be run in polynomial time, since the kernelization for Path Contraction can be applied in polynomial time. Thus Reduction Rule 4 can also be applied in polynomial time, by simply checking all possible pairs of vertices with degree 2 in the graph to see if we need to apply the kernelization for PCFE.
\end{proof}


\section{Conclusion}
In the past decade, much effort has been put into obtaining better parameter dependence in the running time for all kinds of classical parameterized problems, like \textsc{Vertex Cover}, \textsc{Feedback Vertex Set}, \textsc{Multiway Cut} and so on. There are mainly two directions of algorithmic improvement for a problem that has been proved to be FPT, to obtain a better running time and to obtain a better kernel. In this paper, we provide a kernel for the Cycle Contraction problem with at most $5k+4$ vertices, which is a non-trivial improvement over the $6k+6$ kernel in \cite{belmonte2013parameterized}. Our improvement relies on observing the connection between Path Contraction and Cycle Contraction, which allows us to utilize an existing result on Path Contraction problem. 

In directed graphs, there are two types of contractions, i.e. the set contraction and path contraction, see the definitions in \cite{bang2009digraphs}.
It would be interesting to see whether the paramterized tractability results of the contraction problems can be generalized to the directed case.

\section{Acknowledgement}
The research of Bin Sheng was partially supported by China Scholarship Council (No. 201306140052). 
Yuefang Sun was supported by
National Natural Science Foundation of China (No. 11401389), China
Scholarship Council (No. 201608330111) and Zhejiang Provincial Natural Science Foundation of China (No. LY17A010017).

\bibliographystyle{plain}
\bibliography{references}

\begin{thebibliography}{10}

\bibitem{bang2009digraphs}
J{\o}rgen Bang-Jensen and Gregory~Z Gutin.
\newblock {\em Digraphs: theory, algorithms and applications}.
\newblock Springer Science \& Business Media, 2 edition, 2009.

\bibitem{belmonte2013parameterized}
R{\'e}my Belmonte, Petr~A Golovach, Pim van't Hof, and Dani{\"e}l Paulusma.
\newblock Parameterized complexity of two edge contraction problems with degree
  constraints.
\newblock In {\em International Symposium on Parameterized and Exact
  Computation}, pages 16--27. Springer, 2013.

\bibitem{bollobas2013modern}
B{\'e}la Bollob{\'a}s.
\newblock {\em Modern graph theory}, volume 184.
\newblock Springer Science \& Business Media, 2013.

\bibitem{cygan2015parameterized}
Marek Cygan, Fedor~V Fomin, {\L}ukasz Kowalik, Daniel Lokshtanov, D{\'a}niel
  Marx, Marcin Pilipczuk, Micha{\l} Pilipczuk, and Saket Saurabh.
\newblock {\em Parameterized Algorithms}, volume~4.
\newblock Springer, 2015.

\bibitem{downey2013fundamentals}
Rodney~G Downey and Michael~R Fellows.
\newblock {\em Fundamentals of parameterized complexity}, volume~4.
\newblock Springer, 2013.

\bibitem{Flum:2006:PCT:1121738}
J.~Flum and M.~Grohe.
\newblock {\em Parameterized Complexity Theory (Texts in Theoretical Computer
  Science. An EATCS Series)}.
\newblock Springer-Verlag New York, Inc., Secaucus, NJ, USA, 2006.

\bibitem{golovach2011increasing}
Petr~A Golovach, Marcin Kami{\'n}ski, Dani{\"e}l Paulusma, and Dimitrios~M
  Thilikos.
\newblock Increasing the minimum degree of a graph by contractions.
\newblock In {\em International Symposium on Parameterized and Exact
  Computation}, pages 67--79. Springer, 2011.

\bibitem{golovach2013obtaining}
Petr~A Golovach, Pim van't Hof, and Dani{\"e}l Paulusma.
\newblock Obtaining planarity by contracting few edges.
\newblock {\em Theoretical Computer Science}, 476:38--46, 2013.

\bibitem{guillemot2013faster}
Sylvain Guillemot and D{\'a}niel Marx.
\newblock A faster fpt algorithm for bipartite contraction.
\newblock {\em Information Processing Letters}, 113(22):906--912, 2013.

\bibitem{guo2015obtaining}
Chengwei Guo and Leizhen Cai.
\newblock Obtaining split graphs by edge contraction.
\newblock {\em Theoretical Computer Science}, 607:60--67, 2015.

\bibitem{heggernes2013obtaining}
Pinar Heggernes, Pim~van't Hof, Daniel Lokshtanov, and Christophe Paul.
\newblock Obtaining a bipartite graph by contracting few edges.
\newblock {\em SIAM Journal on Discrete Mathematics}, 27(4):2143--2156, 2013.

\bibitem{heggernes2014contracting}
Pinar Heggernes, Pim van't Hof, Benjamin L{\'e}v{\^e}que, Daniel Lokshtanov,
  and Christophe Paul.
\newblock Contracting graphs to paths and trees.
\newblock {\em Algorithmica}, 68(1):109--132, 2014.

\bibitem{kratsch2014recent}
Stefan Kratsch.
\newblock Recent developments in kernelization: A survey.
\newblock {\em Bulletin of EATCS}, 2(113), 2014.

\bibitem{li2016improved}
Wenjun Li, Qilong Feng, Jianer Chen, and Shuai Hu.
\newblock Improved kernel results for some fpt problems based on simple
  observations.
\newblock {\em Theoretical Computer Science}, 2016.

\bibitem{lokshtanov2012kernelization}
Daniel Lokshtanov, Neeldhara Misra, and Saket Saurabh.
\newblock Kernelization-preprocessing with a guarantee.
\newblock In {\em The Multivariate Algorithmic Revolution and Beyond}, pages
  129--161. Springer, 2012.

\bibitem{lokshtanov2013hardness}
Daniel Lokshtanov, Neeldhara Misra, and Saket Saurabh.
\newblock On the hardness of eliminating small induced subgraphs by contracting
  edges.
\newblock In {\em International Symposium on Parameterized and Exact
  Computation}, pages 243--254. Springer, 2013.

\bibitem{mathieson2012editing}
Luke Mathieson and Stefan Szeider.
\newblock Editing graphs to satisfy degree constraints: A parameterized
  approach.
\newblock {\em Journal of Computer and System Sciences}, 78(1):179--191, 2012.

\end{thebibliography}

\end{document}